\title{Bisimulation equivalence  
	of first-order grammars is Ackermann-hard}
\author{Petr Jan\v{c}ar\\
 \ \\
Dept Comp. Sci., FEI,	Techn. Univ. Ostrava, Czech Republic,
 \\
 http://www.cs.vsb.cz/jancar/,
 \\
 email: petr.jancar@vsb.cz
}
\date{}
\begin{document}

\maketitle

\begin{abstract}
\noindent	
Bisimulation equivalence 
(or bisimilarity)
of
first-order grammars is decidable, as follows from  
the decidability result 
by S\'enizergues (1998, 2005) that has been given
in an equivalent framework of 
equational graphs with finite out-degree, or of 
pushdown automata (PDA)
with
only deterministic and popping $\varepsilon$-transitions.
Benedikt, G\"oller, Kiefer, and Murawski (2013) have shown that the
bisimilarity problem for PDA (even) without $\varepsilon$-transitions is
nonelementary. 
Here we show Ackermann-hardness for bisimilarity of first-order grammars. 
The grammars do not use explicit $\varepsilon$-transitions, but they
correspond to the above mentioned PDA with (deterministic and popping)  
$\varepsilon$-transitions, and this feature is substantial in the
presented lower-bound proof.
The proof is based on
a (polynomial) reduction from the reachability problem of reset (or
lossy) counter machines, for which  the Ackermann-hardness has been
shown by Schnoebelen (2010); in fact, this reachability problem is known to be 
Ackermann-complete, i.e.,
$\mathbf{F}_\omega$-complete in the hierarchy of fast-growing
complexity classes defined by Schmitz (2013).
\end{abstract}

\noindent
\textbf{Keywords:}
\\
first-order grammar, term rewriting, pushdown
automaton, bisimulation, complexity

\subsection*{Introduction}

Bisimulation equivalence, 
also called bisimilarity, has been recognized as a fundamental behavioural equivalence 
of systems. It is thus natural to explore the related decidability and
complexity questions for various computational models. 

The involved result by S\'enizergues~\cite{Seni05}, showing the decidability of
bisimilarity for equational graphs of finite out-degree, or
equivalently for 
pushdown automata (PDA)
with
only deterministic and popping $\varepsilon$-transitions, surely
belongs to the most fundamental results in this area. (This result
generalized S\'enizergues's solving of the famous DPDA equivalence
problem.)
The complexity of the bisimilarity problem has been recently shown to
be nonelementary, by
Benedikt, G\"oller, Kiefer, and Murawski~\cite{BGKM12}, even for PDA
with no $\varepsilon$-transitions.

Here we look at the bisimilarity problem in the framework of 
first-order grammars (i.e., finite sets of term root-rewriting rules). 
This framework is long known as equivalent to the PDA framework.
(We can refer, e.g., to~\cite{JancarLICS12} for a recent use of a concrete
respective transformation.)
Hence  S\'enizergues's decidability proof applies to the grammars as
well.
We show that the problem for grammars is
``Ackermann-hard'', and thus not primitive recursive. 
Though the grammars do not use explicit $\varepsilon$-transitions, they
correspond to the above mentioned PDA with deterministic and popping
$\varepsilon$-transitions, and this feature is substantial in the
presented lower-bound proof.
Hence the nonelementary bound of~\cite{BGKM12} remains
the best known lower bound for bisimilarity of pushdown systems
without $\varepsilon$-transitions.

The presented proof is based on
a (polynomial) reduction from the reachability problem of reset (or
lossy) counter machines, for which  the Ackermann-hardness has been
shown by Schnoebelen (see~\cite{DBLP:conf/mfcs/Schnoebelen10} and the
reference therein).
In fact, we also know an ``Ackermannian'' upper bound for this
reachability problem~\cite{DBLP:conf/lics/FigueiraFSS11}, 
and the problem is thus $\mathbf{F}_\omega$-complete (or 
$\textsc{Ack}$-complete) 
 in the hierarchy of fast-growing
complexity classes defined by Schmitz~\cite{Schmitz2013}.
The question of a similar
upper bound for the bisimilarity problem is not addressed in this
paper.

\subsection*{Definitions, and the result}

We briefly recall the notions 
that are needed for stating the result. We use the forms that are
convenient here; e.g., we (harmlessly) 
define bisimulations as symmetric, though
this is usually not required. 

\textbf{First-order terms.}
We assume a fixed set of 
\emph{variables} $\var=\{x_1,x_2,\dots\}$.
Given a set $\calN$ of function symbols with arities,
by $\trees_\calN$ we
denote the set of \emph{terms over} $\calN$.
A \emph{term} $T\in\trees_\calN$
is either  a variable $x_i$ 
or  $A(U_1,\dots,U_m)$ where 
$A\in\calN$, 
$\arity(A)=m$, and $U_1,\dots,U_m$ are terms.

\textbf{First-order grammars.}
A (first-order) \emph{grammar} is a tuple $\calG=(\calN,\act,\calR)$
where $\calN$ is a finite set of ranked \emph{nonterminals}
(or function symbols with arities),
$\act$ is a finite set of  \emph{actions} (or terminals),
and $\calR$ is a finite set of (root-rewriting) \emph{rules} of the form
$A(x_1,\dots,x_m)\gt{a}V$ where 
$A\in\calN$, $\arity(A)=m$,
$a\in\act$, and
$V\in\trees_{\calN}$ is a 
term
in which all occurring variables are from the set 
$\{x_1,\dots,x_m\}$.
(Some example rules are $A(x_1,x_2,x_3)\gt{b}C(D(x_3,B),x_2)$,
$A(x_1,x_2,x_3)\gt{b}x_2$,
$D(x_1,x_2)\gt{a}A(D(x_2,x_2),x_1,B)$; here
 the arities of $A,B,C,D$ are $3,0,2,2$, respectively.)

\textbf{LTSs associated with grammars.}
A grammar $\calG=(\calN,\act,\calR)$ defines the \emph{labelled
transition system} $\calL_\calG=(\trees_\calN,\act,(\gt{a})_{a\in\act})$
in which each rule $A(x_1,\dots,x_m)\gt{a}V$ from $\calR$ induces 
transitions 
$(A(x_1,\dots,x_m))\sigma\gt{a}V\sigma$ for all substitutions
$\sigma:\var\rightarrow\trees_\calN$.
(The above example rules thus induce, e.g., 
$A(x_1,x_2,x_3)\gt{b}C(D(x_3,B),x_2)$ (here $\sigma(x_i)=x_i$),
$A(V,x_5,U)\gt{b}C(D(U,B),x_5)$ (here
$\sigma(x_1)=V$, $\sigma(x_2)=x_5$, $\sigma(x_3)=U$),
$A(U_1,U_2,U_3)\gt{b}C(D(U_3,B),U_2)$, $A(U_1,U_2,U_3)\gt{b}U_2$,
etc.)

\textbf{Bisimulation equivalence.}
Given $\calG=(\calN,\act,\calR)$, a set (or a relation)
$\calB\subseteq \trees_\calN\times\trees_\calN$ is a
\emph{bisimulation} if it is \emph{symmetric} ($(T,U)\in\calB$ implies
$(U,T)\in\calB$), and for any $(T,U)\in\calB$ and $T\gt{a}T'$ there is
$U'$ such that  $U\gt{a}U'$ and $(T',U')\in\calB$.
Two \emph{terms} $T,U$ are \emph{bisimilar}, written $T\sim U$, if there is a
bisimulation containing $(T,U)$.
\\
The \emph{bisimilarity problem for first-order grammars} asks, given a grammar $\calG$ and
terms $T,U$, whether $T\sim U$.

\textbf{Ackermann-hardness.}
We refer to~\cite{Schmitz2013} for detailed definitions of the class
 $\mathbf{F}_\omega$ (or $\textsc{Ack}$)
of decision problems, and of the problems that are 
$\mathbf{F}_\omega$-complete
or $\mathbf{F}_\omega$-hard.
Here we just recall the ``Ackermannian'' function
$f_A:\Nat\rightarrow\Nat$ (where $\Nat=\{0,1,2,\dots\}$) defined as
follows: we first define  the family $f_0,f_1,f_2,\dots$ 
by putting $f_0(n)=n{+}1$ and $f_{k+1}(n)=f_k(f_k( \dots f_k(n)\dots))$
where $f_k$ is applied $n{+}1$ times; then we put $f_A(n)=f_n(n)$.
The problem $\textsc{HP}_\textsc{Ack}$ that asks,
given a Turing machine $M$, an input $w$, and some
$n\in\Nat$, whether $M$ halts on $w$ within $f_A(n)$ steps, is an
example of an \emph{Ackermann-complete problem}. 
We say (in this paper) that a \emph{problem}
$\calP$ is
\emph{Ackermann-hard} if $\textsc{HP}_\textsc{Ack}$ is reducible 
to $\calP$, or to the complementary problem co-$\calP$, 
by a standard polynomial
many-one reduction.
(The notion is more general in~\cite{Schmitz2013}, and it also 
includes primitive-recursive reductions.)

\begin{theorem}\label{th:bisackhard}
The bisimilarity problem for
first-order grammars is Ackermann-hard. 
\end{theorem}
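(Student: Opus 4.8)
The plan is to give a polynomial many-one reduction from the reachability problem for reset (lossy) counter machines, which is Ackermann-hard by Schnoebelen, to the \emph{non}-bisimilarity of first-order grammars; since the notion of Ackermann-hardness used here permits reducing to the complementary problem, this suffices. Given a reset counter machine $M$ with counters $c_1,\dots,c_k$, a finite set of control states, and a designated target, I would construct a grammar $\calG$ and two terms $T_0,U_0\in\trees_\calN$ such that $T_0\sim U_0$ if and only if $M$ has \emph{no} (lossy) run reaching the target. Correctness is argued through the bisimulation game between Attacker, who tries to refute $T\sim U$ by choosing a move $T\gt{a}T'$, and Defender, who tries to maintain the relation by answering $U\gt{a}U'$: the two terms are bisimilar exactly when Defender has a strategy to play forever, and the whole construction is arranged so that a faithful play corresponds to a run of $M$.

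For the encoding, a configuration $(q,n_1,\dots,n_k)$ is represented by a term whose root context records the control state $q$ and whose relevant subterms are linear ``stacks'' of height $n_i$, built from a unary increment symbol over a bottom marker, so that counter $i$ is zero precisely when the bottom marker sits at the root of the corresponding subterm. An increment is a one-rule push, a decrement is a one-rule pop realized by a projection rule $A(x_1,\dots,x_m)\gt{a}x_i$ that exposes the subterm below, and a zero-test merely inspects the root symbol. The two terms $T,U$ evolve in near-lockstep, and I would use the standard \emph{Defender's forcing} technique: the branching available in first-order terms lets Defender select the matching branch at each instruction, while any attempt by either player to deviate from the intended instruction, or to misreport a zero-test, is punished by steering the play into a pair that is immediately seen to be (non-)bisimilar. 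Thus the sequence of actions in any non-losing play spells out a legal control flow of $M$.

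The crux is the reset operation, and this is where the PDA feature of \emph{deterministic, popping} $\varepsilon$-transitions is essential. To reset a counter we must remove an a priori unbounded number of increment symbols in a single logical step; I would implement this by a deterministic popping sub-process that strips the increment layers one at a time down to the bottom marker, the determinacy guaranteeing that neither player can interfere while the stack is emptied and that both terms empty in step. A reduction that tried to expose these pops as ordinary, individually matchable actions would hand the opponent a choice in the middle of a reset and break the simulation; it is precisely the silent, forced nature of the popping that makes the gadget sound. The ``lossy'' rather than exact semantics is matched by granting one player a controlled slack---an answer may drop extra counter value---so that only runs realizable under loss are simulated; this slack is also what keeps the encoded behaviour inside the decidable fragment, consistent with the decidability of bisimilarity.

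Finally I would prove the two directions: if $M$ reaches the target, then Attacker can drive the game, following the witnessing run, into a distinguishing pair, so $T_0\not\sim U_0$; conversely, any winning Attacker strategy must, by the forcing gadgets, trace a genuine lossy run to the target, so $T_0\sim U_0$ whenever no such run exists. Verifying that $\calG$ has size polynomial in $M$---a fixed number of nonterminals and rules per instruction and per counter---completes the reduction. I expect the main obstacle to lie in the gadget design together with this correctness argument: making every possible cheat immediately punishable, getting the zero-test and the forced reset to be simultaneously sound for \emph{both} players, and checking that the Attacker/Defender slack reproduces exactly the lossy-counter semantics rather than something weaker or stronger.
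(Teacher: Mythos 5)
Your overall strategy coincides with the paper's: reduce RCM-reachability (Ackermann-hard by Schnoebelen) to the complement of bisimilarity, encode a configuration by a term whose root carries the control state and whose arguments are unary stacks $I^{n}\bot$, and use Defender's forcing so that any non-losing play of the bisimulation game traces a run of the machine. However, the central gadget you propose does not work in the first-order-grammar model, and the idea that makes the paper's reduction go through is absent from your proposal.

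The problem is your decrement (and your zero-test). A grammar rule is root-rewriting: it has the form $A(x_1,\dots,x_m)\gt{a}V$ and can only copy, discard, or wrap the arguments $x_1,\dots,x_m$; it cannot pattern-match or rewrite \emph{inside} an argument. So there is no rule that removes the top $I$ from the subterm $I^{n_i}\bot$ sitting in position $i$ while preserving the control state and the other counters: your ``projection rule $A(x_1,\dots,x_m)\gt{a}x_i$'' replaces the \emph{entire} term by the $i$-th stack, destroying everything else. For the same reason no rule can ``inspect the root symbol'' of a subterm to test it for zero (and note that RCMs have resets, not zero-tests). The paper circumvents this by never decrementing any stack during the simulation: each machine counter $c_i$ is represented by \emph{two} monotone stacks, counting the increments and the decrements of $c_i$ since its last reset, so that $\decr(c_i)$ is simulated by pushing on the second stack. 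Legality of a decrement ($n_i>n'_i$) is then enforced by Defender's forcing: Defender may, instead of accepting the decrement, steer the play to the pair $(I^{n_i}\bot,I^{n'_i}\bot)$, which is bisimilar iff $n_i=n'_i$, i.e., iff the decrement was illegal. This two-stack device is the crux of the reduction and is what your proposal is missing. Relatedly, your reset gadget is over-engineered in the wrong direction: in the grammar framework a reset is a \emph{single} rule whose right-hand side simply does not use $x_{2i-1},x_{2i}$, replacing both stacks by $\bot$; no ``deterministic popping sub-process'' exists or is needed (the connection to popping $\varepsilon$-transitions only appears when the unbalanced terms are translated into PDA stacks, and the only step-by-step popping in the game is the final equality check driven by $I(x_1)\gt{a}x_1$). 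Finally, the ``controlled slack'' you introduce to match lossy semantics is unnecessary: the reduction is from exact reachability of reset machines, and neither player is granted any slack.
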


\subsection*{Proof of the theorem}

A direct reduction from $\textsc{HP}_\textsc{Ack}$ would require many
technicalities. Fortunately, these have been already handled in 
deriving  Ackermann-hardness (or Ackermann-completeness) of other
problems.
For our reduction we thus choose a more convenient
Ackermann-hard problem (which is Ackermann-complete, in
fact), namely the reachability problem for reset counter machines.

\textbf{Reset Counter Machines (RCMs).}
An \emph{RCM} is a tuple  $\calM=(d,Q,\delta)$ where 
$d$ is the \emph{dimension}, yielding $d$ nonnegative
\emph{counters} $c_1,c_2,\dots,c_d$,
$Q$ is a
finite set of \emph{(control) states},
and $\delta\subseteq Q\times\ACT\times Q$ is a finite set of
\emph{instructions}, where the set 
$\ACT$ of \emph{operations} contains $\incr(c_i)$ (increment $c_i$),
$\decr(c_i)$ (decrement $c_i$), 
and $\reset(c_i)$ (set $c_i$ to $0$),
for $i=1,2,\dots,d$.
We view $Q\times\Nat^d$ 
as the set $\conf$ of 
\emph{configurations} of $\calM$. 
The \emph{transition relation} $\gt{}\subseteq \conf\times\conf$ is
induced by $\delta$ in the obvious way:
If $(p,op,q)\in\delta$ then we have $(p,
(n_1,\dots,n_d))\gt{}(q,(n'_1,\dots,n'_d))$
in the following cases: 
\begin{itemize}
	\item
		$op=\incr(c_i)$, $n'_i=n_i{+}1$, and $n'_j=n_j$ for
		all $j\neq i$; or 
	\item
$op=\decr(c_i)$, $n_i>0$, $n'_i=n_i{-}1$, and $n'_j=n_j$ for
		all $j\neq i$; or 
	\item
$op=\reset(c_i)$, $n'_i=0$, and $n'_j=n_j$ for
		all $j\neq i$.
\end{itemize}		
By $\gt{}^*$ we denote the reflexive and transitive closure of
$\gt{}$.

\textbf{Complexity of the reachability problem for RCMs.}
We define the \emph{RCM-reachability problem} in the following
convenient form: given an RCM $\calM=(d,Q,\delta)$ and (control) states
$p_\init$, $p_\final$, we ask if $p_\final$ is reachable from
$(p_\init,(0,0,\dots,0))$, i.e., if there are $m_1,m_2,\dots,m_d\in\Nat$
such that
$(p_{\init},(0,0,\dots,0))\gt{}^*(p_\final,(m_1,m_2,\dots,m_d))$.
The known results yield:

\begin{theorem}\label{th:RCMAck}
RCM-reachability problem is Ackermann-complete.
\end{theorem}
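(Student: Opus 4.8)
The statement is a combination of two results that are already available in the literature, and the plan is simply to invoke each and to verify that the particular ``convenient form'' of RCM-reachability fixed above lies in the decidable, and hence Ackermann-complete, regime. Membership in $\mathbf{F}_\omega$ ($=\textsc{Ack}$) will be taken from~\cite{DBLP:conf/lics/FigueiraFSS11}, and the matching Ackermann-hardness from~\cite{DBLP:conf/mfcs/Schnoebelen10}; the only genuine work is to reconcile our formulation (reaching a \emph{control state} $p_\final$ from the all-zero configuration) with the forms used in those references.

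For the upper bound the key observation is that an RCM $\calM=(d,Q,\delta)$ is a \emph{well-structured transition system} when $\conf=Q\times\Nat^d$ is ordered by equality on the control state together with the componentwise order on the counters: each of $\incr(c_i)$, $\decr(c_i)$, and $\reset(c_i)$ is monotone with respect to this order (a reset sends every value of $c_i$ to $0$ and hence preserves $\geq$). Since we ask only whether \emph{some} configuration with control state $p_\final$ is reachable, this is a coverability-style question, which I would decide by the generic backward-reachability procedure for well-structured systems; its termination is guaranteed by Dickson's lemma on $\Nat^d$. The Ackermannian running-time bound then follows from the length-function theorems for Dickson's lemma established in~\cite{DBLP:conf/lics/FigueiraFSS11}. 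It is essential here that the problem is control-state (coverability) reachability and not reachability of a fixed configuration, since the latter is already \emph{undecidable} for reset machines.

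For the lower bound I would quote Schnoebelen's construction~\cite{DBLP:conf/mfcs/Schnoebelen10}, which gives a polynomial many-one reduction of $\textsc{HP}_\textsc{Ack}$ to reachability for reset counter machines directly (the same paper also treats lossy counter machines), so no lossy-to-reset translation is needed. That construction weakly computes an Ackermannian budget and enters a designated control state precisely when a reliable run of the simulated machine exists; it therefore already yields a control-state reachability instance. To fit the exact form demanded here I would only prepend a short initialization gadget that produces the required (small, instance-determined) initial counter values from the all-zero configuration and rename the designated state to $p_\final$, which keeps the reduction polynomial.

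The main point to get right is precisely this matching of formulations. Because exact reachability for reset machines is undecidable, one must ensure (i) that the upper-bound algorithm is applied only to the control-state question, where monotonicity and Dickson's lemma apply, and (ii) that Schnoebelen's hardness indeed targets a control state and survives the passage to the zero-initialized, ``reach $p_\final$'' form. Both hold, so the two bounds meet and RCM-reachability is Ackermann-complete.
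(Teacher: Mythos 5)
Your proposal takes the same route as the paper, which likewise proves this theorem purely by citation: Ackermann-hardness from Schnoebelen~\cite{DBLP:conf/mfcs/Schnoebelen10} and the matching upper bound from Figueira et al.~\cite{DBLP:conf/lics/FigueiraFSS11}. Your added justification of the upper bound (RCMs as well-structured systems, backward coverability, Dickson's lemma) and your correct observation that control-state reachability must be distinguished from exact configuration reachability (which is undecidable for reset machines) are sound elaborations of what the paper leaves implicit.
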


It is the lower bound, the  Ackermann-hardness, which is important
for us here; we refer to~\cite{DBLP:conf/mfcs/Schnoebelen10} for a proof 
and further references.
(See also~\cite{DBLP:journals/jsyml/Urquhart99}
for an independent proof related to relevance logic.)
The upper bound follows from~\cite{DBLP:conf/lics/FigueiraFSS11}.
Hence the problem is
$\mathbf{F}_\omega$-complete in the sense of~\cite{Schmitz2013}.
We note that the same result holds for
\emph{lossy counter machines}; they have the zero-test instead of the
reset, and any counter can spontaneously decrease at any time.
(We prefer RCMs since they are a slightly simpler model.)

\textbf{RCM-reachability reduces to first-order bisimilarity.}
We finish by proving  the next lemma; this establishes 
Theorem~\ref{th:bisackhard}, by 
using the hardness part of Theorem~\ref{th:RCMAck}.
The reduction in the proof of the lemma is obviously polynomial; in
fact, it can be checked to be a logspace reduction, 
but this is a minor point in the view of the fact that even a 
primitive-recursive reduction would suffice here.

\begin{lemma}
The RCM-reachability problem is polynomially reducible to the complement 
of the bisimilarity problem for
first-order grammars.
\end{lemma}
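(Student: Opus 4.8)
The plan is to view bisimilarity through the associated game between an \emph{attacker} (Spoiler) and a \emph{defender} (Duplicator): $T\not\sim U$ holds exactly when Spoiler has a winning strategy in the game starting from $(T,U)$. Since reachability is existential (``there is a run to $p_\final$'') while bisimilarity is the universal safety condition (``for every Spoiler move Duplicator can respond forever''), it is natural that RCM-reachability reduces to the \emph{complement} of bisimilarity: I would arrange the grammar so that Spoiler wins iff Spoiler can steer a faithful simulation of $\calM$ into the control state $p_\final$. Concretely, I would encode a configuration $(q,(n_1,\dots,n_d))$ as a term whose root nonterminal records the control state $q$ and whose subterms encode the counter values as heights of unary ``stacks'' (e.g.\ counter value $n$ as $n$ nested copies of a unary symbol above a bottom symbol). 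The initial pair $(T,U)$ would both encode $(p_\init,(0,\dots,0))$ but be ``twinned'' so that they stay identical as long as the simulation is faithful and the final state has not been reached.

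Each instruction of $\calM$ must be realised by grammar rules acting at the root. An increment $\incr(c_i)$ is immediate: a single root-rewriting rule replaces the root $A_q$ by $A_{q'}$ and wraps the $i$-th subterm with one more copy of the unary counter symbol. A $\reset(c_i)$ is likewise a one-step rule that installs a fresh bottom symbol in the $i$-th argument, discarding the old (unboundedly deep) subterm. The delicate operation is $\decr(c_i)$, which must remove one symbol from the top of the $i$-th counter and must be \emph{blocked} when that counter is $0$. A root-rewriting rule cannot inspect the top symbol of a subterm, so here I would exploit precisely the popping behaviour highlighted in the statement: a rule that rewrites the root to (a context around) a variable $x_i$ ``pops'' the current root and exposes the root of the counter subterm, whose own nonterminal then either peels off one unit (if it is the successor symbol) or offers no matching move (if it is the bottom symbol, i.e.\ the counter is zero), thereby implementing the zero-test for free. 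The intermediate navigation and the subsequent restoration of the rest of the configuration would be made \emph{deterministic and essentially silent} --- this is the role of the deterministic popping $\varepsilon$-transitions --- so that these bookkeeping steps create no branching in the bisimulation game and hence no advantage for either player.

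To glue these gadgets into a correct reduction I would use a \emph{defender's forcing} construction: auxiliary rules and symbols that let Duplicator punish any Spoiler move which is not a legal, faithfully simulated instruction (by moving to a position from which the two sides become bisimilar, so Spoiler loses), while leaving Spoiler free to choose, at each step, which applicable instruction of the nondeterministic machine $\calM$ to fire. I would design the rules at the root nonterminal $A_{p_\final}$ so that reaching $p_\final$ opens up a distinguishing action available on one side of the pair but not the other, handing Spoiler the win. Correctness then has two directions: if $\calM$ reaches $p_\final$, Spoiler plays the witnessing run and wins, so $T\not\sim U$; conversely, if $p_\final$ is unreachable, Duplicator survives forever by mirroring faithful moves and punishing cheats, so $T\sim U$. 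The whole construction is clearly of size polynomial (indeed logspace) in $\calM$. I expect the main obstacle to be the $\decr$/$\reset$ machinery: peeling or discarding a subterm at the root while provably preserving the remaining counters and the control state, and doing so with deterministic silent popping so that the bisimulation game does not collapse spuriously (which would let Duplicator survive even when $p_\final$ is reachable, breaking the reduction). Getting this bookkeeping exactly bisimulation-faithful, rather than the high-level idea, is where the real work lies.
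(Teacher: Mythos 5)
There is a genuine gap, and it sits exactly where you located ``the real work'': the $\decr(c_i)$ gadget you sketch cannot be realised in this model. A first-order grammar rewrites only at the root of the whole term, and a rule $A(x_1,\dots,x_m)\gt{a}V$ can use its arguments only as opaque subterms: it can wrap them, copy them, discard them, or promote one of them to become the new root, but it can never deconstruct one. So ``peeling one unit off the $i$-th counter'' in place is impossible; and the only way to expose the top symbol of that counter for rewriting is to rewrite the root to the bare variable $x_{2i-1}$ (a context around it does not help, since a subterm under a nontrivial context is not at the root and cannot be rewritten) --- at which point the control state and all other counters are irretrievably gone. There is no ``subsequent restoration of the rest of the configuration'': neither the grammar rules (which, note, all carry visible actions; the model has no silent steps at all) nor the deterministic popping $\varepsilon$-transitions of the equivalent PDA --- which correspond precisely to rules $A(x_1,\dots,x_m)\rightarrow x_i$ --- can recover discarded siblings. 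This is not a bookkeeping nuisance but a fundamental obstruction: if one could decrement and zero-test $d\geq 2$ counters while preserving the configuration, one could faithfully simulate Minsky machines, and bisimilarity would be undecidable rather than Ackermann-hard.

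The paper's construction avoids decrementing altogether. Each counter $c_i$ of $\calM$ is represented by \emph{two} monotone game-counters $c_i^I$ and $c_i^D$ (the numbers of increments and of decrements of $c_i$ since its last reset), stored as unary towers $I^{n_i}\bot$ and $I^{n'_i}\bot$ in two argument positions; a $\decr(c_i)$ instruction is simulated by \emph{incrementing} $c_i^D$, and $\reset(c_i)$ resets both towers to $\bot$, so the grammar only ever wraps or replaces arguments. The legality of a decrement (i.e.\ $n_i>n'_i$) is never checked in place: Defender's forcing lets Defender, at each decrement, either accept it or claim it illegal, and in the latter case the play moves to the pair $(I^{n_i}\bot,\, I^{n'_i}\bot)$, which is bisimilar iff $n_i=n'_i$. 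Here the rest of the configuration \emph{is} discarded --- but harmlessly, because the game has reduced to a terminal equality test. Your remaining ingredients (the game view of non-bisimilarity, the twinned pair $A_p(\dots)$ versus $B_p(\dots)$, a distinguishing action at $p_\final$ available on one side only, Defender's forcing against unfaithful moves, and the polynomial/logspace size bound) all match the paper, but without the two-monotone-counters idea the reduction does not go through.
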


\begin{proof}
Let us consider an instance $\calM=(d,Q,\delta)$,
$p_\init$, $p_\final$ of the RCM-reachability problem,
and imagine the
following game between Attacker (he) and Defender (she). This is the first
version of a game that will be afterwards
implemented as a standard bisimulation game.
Attacker aims to show that $p_\final$ is reachable from 
$(p_{\init}, (0,0,\dots,0))$, while 
Defender opposes this.

The game uses $2d$ \emph{game-counters}, which are never decremented;
each counter $c_i$ of $\calM$ yields two game-counters, namely 
$c_i^I$ and $c_i^D$, for counting the numbers of Increments and
Decrements of $c_i$, respectively, since the last
reset or since the beginning if there has been no reset of $c_i$ so
far.
The \emph{initial position} is 
$(p_\init, ((0,0), \dots,(0,0)))$, with all 
$2d$ game-counters (organized in pairs) having the value $0$.

A game \emph{round} from position $(p,((n_1,n'_1), \dots,
(n_d,n'_d)))$
proceeds as described below.
It will become clear that 
it suffices to consider only the cases $n_i\geq n'_i$;
the position then corresponds to the $\calM$'s configuration
$(p,(n_1{-}n'_1, \dots,  n_d{-}n'_d))$.

If $p=p_\final$, then Attacker wins;
if  $p\neq p_\final$ and there is no instruction $(p,op,q)\in\delta$, then 
Defender wins.
Otherwise Attacker chooses $(p,op,q)\in\delta$, and the continuation
depends on $op$:
\begin{enumerate}
\item
If $op=\incr(c_i)$, then the next-round position arises (from the
previous one) by replacing $p$ with $q$ and by performing 
$c^I_i:=c^I_i{+}1$ (the counter of increments of $c_i$ is
incremented, i.e., $n_i$ is replaced with $n_i{+}1$).
\item
If $op=\reset(c_i)$, then the next-round position arises 
by replacing $p$ with $q$ and by performing 
$c^I_i:=0$ and $c^D_i:=0$ (hence both $n_i$ and $n'_i$ are replaced
with $0$).
\item
If $op=\decr(c_i)$, then \emph{Defender chooses} one of the following
options:
\begin{enumerate}
\item
the next-round position arises 
by replacing $p$ with $q$ and by performing 
$c^D_i:=c^D_i{+}1$ (the counter of decrements of $c_i$ is
incremented, i.e., $n'_i$ is replaced with $n'_i{+}1$), or
\item
	(Defender claims that this decrement is \emph{illegal} since $n_i=n'_i$ and)
the next position becomes just $(n_i,n'_i)$. 
In this case a (deterministic) check if $n_i=n'_i$ 
is performed, by successive
synchronized decrements at both sides. If indeed
$n_i=n'_i$ (the counter-bottoms are reached at the same moment),
then Defender wins; otherwise (when $n_i\neq n'_i$)
Attacker wins.
\end{enumerate}
\end{enumerate}
If
$(p_\init,(0,0,\dots,0))\gt{}^*(p_\final,(m_1,m_2,\dots,m_d))$ for some
$m_1,m_2,\dots,m_d$, i.e., if the answer to RCM-reachability is YES,
then Attacker has a winning strategy: he just follows the
corresponding
sequence of instructions. He thus also always chooses $\decr(c_i)$
\emph{legally},
i.e. only in the cases where $n_i>n'_i$, and Defender loses if
she ever chooses 3(b).
If the answer is NO ($p_\final$ is not reachable), and Attacker
follows a legal sequence of instructions, then he
either loses in a ``dead'' state or the play is infinite; if Attacker
chooses an illegal decrement, then in the first such situation we
obviously have
$n_i=n'_i$ for the respective counter $c_i$, and Defender can force
her win via 3(b).

Since the game-counters can be only incremented or reset, it is a
routine
to implement the above game as a bisimulation game in the
grammar-framework
(using a standard
method of ``Defender's forcing'' for implementing the choice in 3).
We now describe the corresponding 
grammar $\calG=(\calN,\Sigma,\calR)$.

The set $\calN$ of nonterminals will include 
a unary nonterminal $I$, a nullary nonterminal $\bot$, and the
nonterminals with arity  $2d$ that are induced by control states of
$\calM$ as follows:
each $p\in Q$ induces 
$A_p, A_{(p,i)}, B_p, B_{(p,i,1)},
B_{(p,i,2)}$, where $i=1,2,\dots,d$.

We intend that 
a game-position $(p,((n_1,n'_1),\dots,(n_d,n'_d)))$ corresponds to the
pair of terms 
\begin{equation}\label{eq:pairposition}
\left(A_p(I^{n_1}\bot, I^{n'_1}\bot,\dots, I^{n_d}\bot,
I^{n'_d}\bot), \;B_p(I^{n_1}\bot, I^{n'_1}\bot,\dots, I^{n_d}\bot,
I^{n'_d}\bot)\right)
\end{equation}
where $I^k\bot$ is a shorthand for 
$I(I(\dots I(\bot)\dots))$ with $I$ occurring $k$ times; we put
$I^0\bot=\bot$.
The RCM-reachability instance $\calM, p_\init, p_\final$ will be
reduced to the (non)bisimilarity-problem instance $\calG$, 
$A_{p_\init}(\bot,\dots,\bot)$, $B_{p_\init}(\bot,\dots,\bot)$.

We put $\act=\delta\,\uplus\,\{a,b\}$, i.e., the actions of
$\calG$ correspond to the instructions (or instruction names)
of $\calM$, and we also use
auxiliary actions $a$, $b$.

The set of rules $\calR$ contains a sole rule for $I$, namely
$I(x_1)\gt{a}x_1$, and no rule for $\bot$;
hence $I^n\bot\sim I^{n'}\bot$ iff $n=n'$.
Each instruction   $\instr=(p,op,q)\in\delta$ induces the rules in $\calR$ as
follows:

\begin{enumerate}
		\item
If $op=\incr(c_i)$, then the induced rules are
\\
$A_p(x_1,\dots,x_{2d})\gt{\instr}A_q(x_1,\dots,x_{2(i-1)},
I(x_{2i-1}), x_{2i},\dots,x_{2d})$, and
\\
$B_p(x_1,\dots,x_{2d})\gt{\instr}B_q(x_1,\dots,x_{2(i-1)},
I(x_{2i-1}), x_{2i},\dots,x_{2d})$.
\item
If $op=\reset(c_i)$, then the induced rules are
\\
$A_p(x_1,\dots,x_{2d})\gt{\instr}A_q(x_1,\dots,x_{2(i-1)},
\bot,\bot,x_{2i+1},\dots,x_{2d})$,
\\
$B_p(x_1,\dots,x_{2d})\gt{\instr}B_q(x_1,\dots,x_{2(i-1)},
\bot,\bot,x_{2i+1},\dots,x_{2d})$.
\item
If $op=\decr(c_i)$, then the induced rules are below;
here we use the shorthand $A\gt{a}B$ when meaning 
$A(x_1,\dots,x_{2d})\gt{a}B(x_1,\dots,x_{2d})$:
\\
$A_p\gt{\instr}A_{(q,i)}$,
$A_p\gt{\instr}B_{(q,i,1)}$,
$A_p\gt{\instr}B_{(q,i,2)}$,
$B_p\gt{\instr}B_{(q,i,1)}$,
$B_p\gt{\instr}B_{(q,i,2)}$,
\\
$A_{(q,i)}(x_1,\dots,x_{2d})\gt{a}A_q(x_1,\dots,x_{2i-1},I(x_{2i}),
x_{2i+1},\dots,x_{2d})$,
\\
$B_{(q,i,1)}(x_1,\dots,x_{2d})\gt{a}B_q(x_1,\dots,x_{2i-1},I(x_{2i}),
x_{2i+1},\dots,x_{2d})$,
\\
$B_{(q,i,2)}(x_1,\dots,x_{2d})\gt{a}A_q(x_1,\dots,x_{2i-1},I(x_{2i}),
x_{2i+1},\dots,x_{2d})$,
\\
$A_{(q,i)}(x_1,\dots,x_{2d})\gt{b}x_{2i-1}$,
$B_{(q,i,1)}(x_1,\dots,x_{2d})\gt{b}x_{2i-1}$,
$B_{(q,i,2)}(x_1,\dots,x_{2d})\gt{b}x_{2i}$.
\end{enumerate}
Moreover, $\calR$ will also contain
$A_{p_\final}(x_1,\dots,x_{2d})\gt{a}\bot$ 
(but \emph{not} $B_{p_\final}(x_1,\dots,x_{2d})\gt{a}\bot$). 

Now we recall the standard (turn-based) \emph{bisimulation game}, starting
with the pair
$(A_{p_\init}(\bot,\dots,\bot),B_{p_\init}(\bot,\dots,\bot))$.
In the round starting with $(T_1,T_2)$, Attacker chooses a transition
$T_j\gt{a}T'_j$ and then Defender chooses $T_{3-j}\gt{a}T'_{3-j}$ (for
the same $a\in\act$); the
next round starts with the pair $(T'_1,T'_2)$. If a player gets
stuck, then (s)he loses; an infinite play is a win of Defender.
It is obvious that Defender has a winning strategy in this game iff 
$A_{p_\init}(\bot,\dots,\bot)\sim B_{p_\init}(\bot,\dots,\bot)$.

We now easily check that this bisimulation game indeed implements the
above described game; a game-position  
$(p,((n_1,n'_1),\dots,(n_d,n'_d)))$ is implemented as the
pair~(\ref{eq:pairposition}). The points 1 and 2 directly correspond
to the previous points 1 and 2.
If Attacker chooses an instruction $\instr=(p,\decr(c_i),q)$, 
then he
must use the respective rule $A_p\gt{\instr}A_{(q,i)}$ in 3, since
otherwise Defender installs syntactic equality, i.e. a pair $(T,T)$.
It is now Defender who chooses $B_p\gt{\instr}B_{(q,i,1)}$ (corresponding
to the previous 3(a)) or $B_p\gt{\instr}B_{(q,i,2)}$ (corresponding
to 3(b)). Attacker then must
choose action $a$ in the first case, and action $b$ in the second
case; otherwise we get syntactic equality.
The first case thus results in the pair 
$(A_q(\dots), B_q(\dots))$ corresponding to the next game-position
(where $c_i^D$ has been incremented), and the second case results in
the pair $(I^{n_i}\bot, I^{n'_i}\bot)$; we have already observed that 
$I^{n_i}\bot\sim I^{n'_i}\bot$ iff $n_i=n'_i$.

Finally we observe that in any pair 
$(A_{p_{\final}}(\dots), B_{p_\final}(\dots))$ Attacker wins immediately
(since the transition $A_{p_\final}(\dots)\gt{a}\bot$ can not be
matched).

We have thus established that $p_\final$ is reachable from
$(p_\init, (0,\dots,0))$ if, and only if, 
$A_{p_\init}(\bot,\dots,\bot)\not\sim
B_{p_\init}(\bot,\dots,\bot)$.
\end{proof}

\subsection*{Additional remarks}

In the above bisimulation game we obviously encounter ``unbalanced'' terms; 
the syntactic tree of an unbalanced term has branches of different lengths. 
This feature is related to deterministic popping
$\varepsilon$-transitions in the pushdown automata corresponding to
our grammars.
Hence the nonelementary bound shown in~\cite{BGKM12} remains
the best known lower bound for bisimilarity of pushdown systems
without $\varepsilon$-transitions.

We can add that introducing (popping) $\varepsilon$-rules
$A(x_1,\dots,x_m)\gt{\varepsilon}x_i$ in our grammars (where such a
rule might be not the only one with $A(x_1,\dots,x_m)$ at the
left-hand side) already yields undecidability of
bisimilarity~\cite{DBLP:journals/jacm/JancarS08}.

\subsection*{Author's acknowledgement}

The reported result has been achieved during my visit at LSV ENS
Cachan, and 
I am grateful to Sylvain Schmitz and Philippe Schnoebelen for
fruitful discussions.

\bibliographystyle{abbrv}
\bibliography{root}

\begin{thebibliography}{1}

\bibitem{BGKM12}
M.~Benedikt, S.~G{\"o}ller, S.~Kiefer, and A.~S. Murawski.
\newblock Bisimilarity of pushdown automata is nonelementary.
\newblock In {\em Proc. LICS 2013}, pages 488--498. IEEE Computer Society,
  2013.

\bibitem{DBLP:conf/lics/FigueiraFSS11}
D.~Figueira, S.~Figueira, S.~Schmitz, and {P}h. Schnoebelen.
\newblock {A}ckermannian and primitive-recursive bounds with {D}ickson's lemma.
\newblock In {\em Proc. LICS 2011}, pages 269--278. IEEE Computer Society,
  2011.

\bibitem{JancarLICS12}
P.~Jan\v{c}ar.
\newblock Decidability of {DPDA} language equivalence via first-order grammars.
\newblock In {\em Proc. LICS 2012}, pages 415--424. IEEE Computer Society,
  2012.

\bibitem{DBLP:journals/jacm/JancarS08}
P.~Jan\v{c}ar and J.~Srba.
\newblock Undecidability of bisimilarity by {D}efender's forcing.
\newblock {\em J. ACM}, 55(1), 2008.

\bibitem{DBLP:conf/mfcs/Schnoebelen10}
{P}h. Schnoebelen.
\newblock Revisiting {A}ckermann-hardness for lossy counter machines and reset
  {P}etri nets.
\newblock In {\em Proc. MFCS 2010}, volume 6281 of {\em Lecture Notes in
  Computer Science}, pages 616--628. Springer, 2010.

\bibitem{Schmitz2013}
S.~Schmitz.
\newblock Complexity hierarchies beyond elementary.
\newblock Manuscript, 2013.

\bibitem{Seni05}
G.~S\'{e}nizergues.
\newblock The bisimulation problem for equational graphs of finite out-degree.
\newblock {\em SIAM J.Comput.}, 34(5):1025--1106, 2005.
\newblock (A preliminary version appeared at FOCS'98.).

\bibitem{DBLP:journals/jsyml/Urquhart99}
A.~Urquhart.
\newblock The complexity of decision procedures in relevance logic {II}.
\newblock {\em J. Symb. Log.}, 64(4):1774--1802, 1999.

\end{thebibliography}

\end{document}